\numberwithin{equation}{section}
\numberwithin{figure}{section}
\theoremstyle{plain}
\newtheorem{thm}{\protect\theoremname}
  \theoremstyle{remark}
  \newtheorem{rem}[thm]{\protect\remarkname}
  \theoremstyle{definition}
  \newtheorem{defn}[thm]{\protect\definitionname}
  \theoremstyle{plain}
  \newtheorem{cor}[thm]{\protect\corollaryname}
\pgfplotsset{width=7cm,compat=1.10}
\newtheorem*{assumption*}{\assumptionnumber}
\providecommand{\assumptionnumber}{}
\newenvironment{assumption}[1]
 {%
  \renewcommand{\assumptionnumber}{Assumption #1}%
  \begin{assumption*}%
  \protected@edef\@currentlabel{#1}%
 }
 {%
  \end{assumption*}
 }
\date{Yom Kippur\\ 10th of Tishrei, 5782\\ Shmitah and Yovel \footnote{According to Chatam Sofer's way of calculating the Yovel, 118*49=5782 (caveat: others may dispute the exact year, e.g., 5782+1).}}
  \providecommand{\corollaryname}{Corollary}
  \providecommand{\definitionname}{Definition}
  \providecommand{\remarkname}{Remark}
\providecommand{\theoremname}{Theorem}
\begin{document}

\title{{\Large{}JUBILEE: Secure Debt Relief and Forgiveness}}

\author{David Cerezo Sánchez\textsuperscript{}\\
{\small{}david@calctopia.com}}
\maketitle
\begin{abstract}
JUBILEE is a securely computed mechanism for debt relief and forgiveness
in a frictionless manner without involving trusted third parties,
leading to more harmonious debt settlements by incentivising the parties
to truthfully reveal their private information. JUBILEE improves over
all previous methods:

- individually rational, incentive-compatible, truthful/strategy-proof,
ex-post efficient, optimal mechanism for debt relief and forgiveness
with private information

- by the novel introduction of secure computation techniques to debt
relief, the ``\textit{blessing of the debtor}'' is hereby granted
for the first time: debt settlements with higher expected profits
and a higher probability of success than without using secure computation\\

A simple and practical implementation is included for ``The Secure
Spreadsheet''.

Another implementation is realised using Raziel smart contracts on
a blockchain with Pravuil consensus.\\

\textbf{Keywords}: secure computation, mechanism design, debt forgiveness

~

\textbf{JEL classification}: D82, G33, H63
\end{abstract}

\section{Introduction}

As it is written in the Torah, the Jubilee is the remission year at
the end of seven cycles of seven years when slaves and prisoners are
freed, properties are returned to their original owners and debts
are forgiven \footnote{Deuteronomy 15:1–6, Leviticus 25:8–13, Isaiah 61:1-2}
to prevent profiting from the poor \footnote{Exodus 22:25}. The first
Jubilee was celebrated in AM 2553, starting the count 14 years after
entering Eretz  Yisrael \footnote{Rambam in Mishneh Torah, Sabbatical Year and the Jubilee 10:2},
as it was previously commandeth \footnote{Leviticus 25:2}: Jubilees
were announced with a \textit{shofar} \footnote{Ibn Ezra on Leviticus 25:9; Rashi on Leviticus 25:10},
an instrument made from a ram's horn (i.e., a \textit{yobhel} or \textit{yoveil,}
from which the term Jubilee is derived \footnote{Ramban on Leviticus 25:10})
and they provided a regulated and periodical ``clean state'' for
debt forgiveness.

By setting debt forgiveness on a fixed and periodical calendar, the
Jubilee resolved the age-old social problem of debt relief: modernly,
governments enforce an intricate system of debt collectors, courts
and legal procedures. In these cases, third parties are involved to
handle conflicts of interests between distrusting parties:
\begin{itemize}
\item debtors request too much debt forgiveness
\item creditors grant debt settlements as small as possible, trying to take
as many assets as possible from debtors: in some instances, the true
amount of debt relief needed is less than the expected revenue from
the debtor to fulfil their debt obligations, but the stated debt recovery
values from creditors exceed said expected revenue and/or continuation
value of the debtor, making it financially impossible for the debtor
to compensate the creditors. As a result, future projects from debtors
will not be undertaken, even if socially desirable, sacrificed to
satisfy the overstated demands from creditors.
\end{itemize}
For the first time, by combining mechanism design and secure computation
we manage to reconcile the demands of debtors and creditors in to
create a new debt relief mechanism in which the best course of action
is the truthful revelation of their private information: removing
most of the conflict from the situation, third parties are no longer
needed and frictionless debt settlements can be reached.

\subsection{Contributions}

In summary, we make the following contributions:
\begin{itemize}
\item an individually rational, incentive-compatible, truthful/strategy-proof,
ex-post efficient, optimal mechanism for debt relief and forgiveness
with private information is introduced
\item further, an implementation using secure computation is presented
\end{itemize}
In a nutshell, we contribute a new methodology for debt relief and
forgiveness in a frictionless and strifeless way without resorting
to third parties (i.e., courts, debt collectors), revamping an age-old
problem with modern cryptographic techniques such that truthful revelation
of private information is the best strategy for all parties involved.

In section \ref{sec:Related-Literature}, \textcolor{black}{we discuss
related literature and prior work. In section \ref{sec:Detailed-Design},
we first introduce our model, then proceed to describe a direct mechanism
for debt settlements, and finally describe the optimal mechanism.
In section \ref{sec:Practical-Implementation}, we describe our implementations
of the optimal mechanism and then we conclude in section \ref{sec:Conclusion}.}

The reader interested in the most practical applications may skip
to section \textcolor{black}{\ref{sec:Practical-Implementation},}

\section{\label{sec:Related-Literature}Related Literature}

The literature on the combination of secure computation and mechanism
design is surprisingly scarce: it mostly focuses on the problems of
secure computation with rational actors \cite{dcMeetsGT,rssIMD,rssMC,rssRevisited,cryptoeprint:2008:097},
and the combination of secure computation with the incentive-compatible
Vickrey auction or its generalisation as the Vickrey–Clarke–Groves
mechanism \cite{cryptoeprint:2003:021,Naor99privacypreserving,ppMultiUnitAuctions}.
Truthfulness is absolutely necessary for secure computation to gain
acceptance in the real world: although any truthful mechanism can
be securely computed \cite{cryptoeprint:2011:005}, the converse does
not hold. Thus, further research is needed to devise securely computed
mechanisms that encourage participants to report their information
truthfully.

\subsection{Prior Work in Cryptography}

Previous literature considered the use of secure computation technologies
for credit origination and rating \cite{byrd2020differentially,cryptoeprint:2015:1006,ppFinancialRisk,ppNetworkAnalytics,privacyPreservingCreditChecking},
but not for debt relief.

This paper also shows that the moral character of cryptographic work
\cite{cryptoeprint:2015:1162} goes way beyond preventing ``\textit{mass
surveillance}'', as it's currently customary in the field of cryptography.

\subsection{Prior Work in Game Theory}

Some seminal papers \cite{neillTalmud,aumannMaschler} started the
game-theoretic study of bankruptcy problems in the Talmud: see \cite{surveyMusketeers,surveyThomson}
for surveys of the extensive literature that followed, including alternative
derivations of the Mishnah of the Talmud rule \footnote{Rabbi Nathan in Babylonian Talmud, Kethuboth 93a; Yerushalmi Talmud,
Kethuboth 10,4} using cooperative bargaining \cite{bankruptcyCooperativeBargaining}
and the strategic Nash cooperative solution \cite{bankruptcyNashProgram}.
Note that the Talmud rule is so influential that it was adopted as
law (\textit{halakha}) in other contexts \footnote{Rambam in Mishneh Torah, Malveh veLoveh, Chapter 20, Section 4}
and it became a source of great discussion\footnote{Rabbi Seadia Gaon in Responsa Sha’arei Zedeq part 4, gate 4; Rabbi
Hai Gaon as quoted by Rabbi Isaac Alfasi on Kethuboth 93a; Rabbi Bezalel
Ashkenazi in Shitah Mekubetzet on Kethuboth 93a; Piniles H.M. in Darka
Shel Torah, p. 64; Rabbi Yehoshua Leib Diskin in Torat Ha’Ohel, vol.
1 page 2b} as Talmud scholars were trying to unravel and interpret it.

This paper fundamentally departs from this line of work found in the
game-theoretic literature:
\begin{enumerate}
\item it focuses on debt relief via debt settlements, it's not restricted
to equitable partitions between creditors
\item creditors are allowed to have private information (i.e., all information
is not public)
\item it uses cryptography as a primary tool to achieve its goals
\item it performs better than the Talmud rule in neutralising creditor's
conflict of interests
\end{enumerate}
The ultimate reason for this innovation is ``\textit{mi-p’nei tikkun
ha-olam}'' (i.e., for the bettering of the world, and for repairing
the world, as modernly interpreted): as participation is individually
rational for creditors (see next condition \ref{eq:creditorParticipationConstraint}),
it is economically rational to participate even with the institution
of the \textit{prozbul} \footnote{Gittin 4:3; Mishnah Sheviit 10},
a legal subterfuge to circumvent the commandeth forgiveness of loans
even after the Shmitah and the Yovel.

\section{\label{sec:Detailed-Design}Model and Design}

This paper is primarily concerned with a setting of distrusting parties,
a debtor entity and multiple creditors, holding asymmetric information
used as input in a process to settle non-marketable debt (i.e., creditors
are assumed not to trade between themselves). The debtor entity is
an intentional generalisation in order to abstract away the more specific
cases of a private party, a corporation or an indebted sovereign state:
note that each of these specific cases may need further refinements
to this general model. The results of this section are themselves
based on these works: \cite{myersonOptimalAuctionDesign}.

\subsection{Model Setup}

A highly indebted entity with total outstanding debt $D$ and no cash
in hand needs funds from external sources for an investment of $I$
(i.e., for debt service and to finance another project) that will
generate future cash flows given by the Discounted Cash Flow of the
current Asset continuation value plus the Investment value, $A+I$:
if the entity files for bankruptcy, the opportunity of the investment
in the other project will be lost. It is assumed that the entity is
prohibited by existing covenants from issuing new higher priority
debt senior to the existing outstanding debt: in other words, the
entity can only undertake the investment if it settles the outstanding
debt after successful negotiations with $n$ risk-neutral creditors
to fully or partially write down the principal of the debt since creditors
would be the main beneficiaries of the investment $I$, thus avoiding
bankruptcy and ensuring the continuation of the debtor entity.

To ease exposition, we assume that the risk-free interest rate is
zero and that each creditor has an identical share of the total outstanding
debt, $d=\left(D/n\right)$, each with equal priority. The willingness
to cancel some of the debt depends on two factors:
\begin{enumerate}
\item The estimated percentage of debt cancellation from other creditors.
\item The expected recovery value $\theta_{i}$ for each creditor $i$ in
case of bankruptcy if the entity does not receive any investments
and/or debt cancellation, and we assume that $\theta_{i}$ represents
the expected recovery value if the creditor owned claims to his proportional
part $n$ of the total entire value of the entity. 
\end{enumerate}
The creditor type $\theta_{i}$ is private information and everyone
holds identical expectations about the possible value of $\theta_{i}$,
captured by the random variable 
\[
\Theta_{i}:\left[\underline{\theta},\overline{\theta}\right]\rightarrow\left[0,1\right]
\]
with distribution $F\left(\theta_{i}\right)$, density $\phi\left(\theta_{i}\right)$,
and $\overline{\theta}\leq D$. With each creditor acting as Bayesian
decision-maker \cite{harsanyiBayesianGames}, creditor types are identically
and independently distributed, its list denoted by the vector $\theta=\left(\theta_{1},\ldots,\theta_{n}\right)$
and the set of all possible types of $s=1,2,\ldots,n$ creditors arise
from their different preferences, given by
\[
I_{s}=\left\{ \theta\mid\underline{\theta}\leq\theta_{j}\leq\overline{\theta}\,\text{for}\,j=1,2,\ldots,s\right\} 
\]

\begin{rem}
\label{rem:assumptions}We assume that each creditor $\theta_{i}$
are the only private knowledge, that is, the vector $\theta=\left(\theta_{1},\ldots,\theta_{n}\right)$
.

We further assume that $F\left(\theta_{i}\right)$, $D$, $A$, and
$I$ are common knowledge. There would be no uncertainty about how
much debt forgiveness should be granted if the vector $\theta$ was
publicly known.
\end{rem}

Additionally, we define
\begin{align*}
\phi\left(\theta\right) & =\prod_{j=1}^{n}\phi\left(\theta_{j}\right)\\
\theta_{-i} & =\left(\theta_{1},\ldots,\theta_{i-1},\theta_{i+1},\ldots,\theta_{n}\right)\\
\phi\left(\theta_{-i}\right) & =\prod_{j\neq i}\phi\left(\theta_{j}\right)
\end{align*}
Creditor $i$ may also know the expected recovery by other creditors
(i.e., using secure computation), in that case, the revised expected
recovery value in bankruptcy would be given by the following liquidation
value
\[
l_{i}\left(\theta_{i},\theta_{i-1}\right)=\theta_{i}+\sum_{j\neq i}e_{i}\left(\theta_{j}\right)
\]
with non-decreasing revised estimation function $e_{i}:\left[\underline{\theta},\overline{\theta}\right]\rightarrow\mathbb{R}$
defining how the creditor would revise his estimate if he knew the
estimate of creditor $j$ , and satisfying
\begin{equation}
\int_{\underline{\theta}}^{\overline{\theta}}e_{i}\left(\theta_{j}\right)\phi\left(\theta_{j}\right)d\theta_{j}=0,\,\text{such that }\forall j\neq i,\label{eq:revisionFunctionCondition}
\end{equation}
implying that $\theta_{i}$ would still be the expected recovery value
of $\theta_{i}$ because
\[
\int_{I_{n-1}}l_{i}\left(\theta_{i},\theta_{-i}\right)\phi\left(\theta_{-i}\right)d\theta_{-i}=\theta_{i}.
\]
As it is evident, the expected recovery value of $\theta_{i}$ cannot
exceed the full value of the debt,
\[
l_{i}\left(\theta_{i},\theta_{-i}\right)\leq d\,\,\text{such that }\forall\theta_{i}\in\left[\underline{\theta},\overline{\theta}\right],\theta_{-i}\in I_{n-1}.
\]

In this paper we use the following definition of efficiency:
\begin{defn}
(\textit{Ex-post} efficiency)\label{def:(ex-post-efficiency)}. We
consider a debt relief process by debt settlement to be \textit{ex-post}
efficient if and only if the entity remains solvent after the new
investment is undertaken with probability one when
\[
A\geq\sum_{i=1}^{n}l_{i}\left(\theta_{i},\theta_{-i}\right),
\]
and if the investment is undertaken with probability zero, the entity
goes bankrupt. Debt settlement can only be efficient if there is a
surplus in the difference between the value of an entity in solvency
and the value of the entity in bankruptcy,

Finally, note that the problem of debt settlement is a problem of
asymmetric information: if the debtor entity knew the true value of
$\theta_{i}$ for every creditor, it would be individually rational
to propose a settlement arrangement giving each creditor $i$ his
expected recovery value if and only if the settlement arrangement
is \textit{ex-post} efficient (i.e., continuation would be Pareto-efficient)
and all creditors would accept. Thus, asymmetric information is a
key feature of debt settlements and $\theta_{i}$ is assumed to be
private information: hence the use of secure computation, to enable
the privacy-preserving computation of said private information between
the distrusting parties.
\end{defn}

\subsection{A Direct Mechanism for Debt Settlement}

In this sub-section, we obtain an \textit{ex-post} efficient revelation
mechanism to settle the outstanding debt, by making use of the Revelation
Principle:
\begin{defn}
(Revelation Principle \cite{myersonOptimalAuctionDesign})\label{def:(Revelation-Principle)}.
Every equilibrium outcome of any arbitrary mechanism can be implemented
as an outcome in a truth-telling equilibrium of an incentive-compatible
direct revelation mechanism.

For each creditor, it's the best response to report his true type
in a truth-telling Bayesian equilibrium point of the revelation game,
thus creditors and the debtor entity would follow the instructions
of the following protocol:
\end{defn}

\noindent\fbox{\begin{minipage}[t]{1\columnwidth - 2\fboxsep - 2\fboxrule}%
\begin{center}
\textbf{Functionality $\mathcal{F}_{YOVEL}$}
\begin{figure}[H]
1. Each creditor privately reports their private information $\theta_{i}$
to a trusted party, at the same time.

2. Using the private reports, a trusted party calculates and instructs:
\begin{itemize}
\item the debtor entity to invest with probability $k$, or go bankrupt
with probability $1-k$. In case of continuing solvent, the entity
must make payments to the creditors as defined by the vector
\[
t=\left(t_{1},\ldots,t_{n}\right)
\]
\item each creditor $i$ must forgive an amount of debt equal to $d-t_{i}$
if the debtor is not declared bankrupt.
\end{itemize}
\caption{\label{fig:Ideal-functionality-Yovel}Ideal functionality $\mathcal{F}_{YOVEL}$}
\end{figure}
\par\end{center}%
\end{minipage}}

~\\
The vector of reported types is defined by
\[
\hat{\theta}=\left(\hat{\theta}_{1},\hat{\theta}_{2}\dots,\hat{\theta}_{n}\right),
\]
the recommended vector of transfer payments be denoted by $t:I_{n}\rightarrow\mathbb{R}^{n}$,
and the recommended probability $k:I_{n}\rightarrow\left[0,1\right]$
of continuing solvent and receiving investment.

Let $\Gamma=\left(t,k\right)$ denote the mechanism implemented by
the trusted party in the ideal model or a Secure Multi-Party Computation
program in the real model, then $t_{j}\left(\hat{\theta}_{i},\theta_{-i}\right)$
is the payment to creditor $j$, and $k\left(\hat{\theta}_{i},\theta_{-i}\right)$
is the investment probability when creditor $i$ reports $\hat{\theta}_{i}$
and all the creditors announce their true types. Note that for the
mechanism to be truthful equilibrium, it must be individually rational
for all players to participate and an equilibrium for each creditor
to report their true $\theta_{i}$.

The expected utility from a mechanism $\Gamma=\left(t,k\right)$ to
creditor $i$ such that it's an equilibrium to report their true type
for every other creditor, is given by
\[
\int_{I_{n-1}}\left\{ k\left(\hat{\theta}_{i},\theta_{-i}\right)t_{i}\left(\hat{\theta}_{i},\theta_{-i}\right)+\left[1-k\left(\hat{\theta}_{i},\theta_{-i}\right)\right]l_{i}\left(\theta_{i},\theta_{-i}\right)\right\} \phi\left(\theta_{-i}\right)d\theta_{-i},
\]
rewritten as
\[
\theta_{i}+\int_{I_{n-1}}k\left(\hat{\theta}_{i},\theta_{-i}\right)\left[t_{i}\left(\hat{\theta}_{i},\theta_{-i}\right)-l_{i}\left(\theta_{i},\theta_{-i}\right)\right]\phi\left(\theta_{-i}\right)d\theta_{-i}.
\]
using the definition of $l_{i}\left(\theta_{i},\theta_{-i}\right)$
and the properties of the revised estimation function \ref{eq:revisionFunctionCondition}.
The change in expected utility is given by
\begin{equation}
U\left(\theta_{i},\hat{\theta}_{i},t_{i},k\right)=\int_{I_{n-1}}k\left(\hat{\theta}_{i},\theta_{-i}\right)\left[t_{i}\left(\hat{\theta}_{i},\theta_{-i}\right)-l_{i}\left(\theta_{i},\theta_{-i}\right)\right]\phi\left(\theta_{-i}\right)d\theta_{-i}\label{eq:changeExpectedUtilityMechanism}
\end{equation}
because in case of no debt settlement, creditors get their expected
recovery value $\theta_{i}$ and the entity goes bankrupt; but in
case of staying solvent, creditors receive payments $t_{i}$ but they
lose the liquidation value. This change in expected utility must be
positive for all creditors

\begin{flalign}
\text{(IR)} &  &  & {U\left(\theta_{i},\hat{\theta}_{i},t_{i},k\right)\geq0,\,\text{such that }\forall\theta_{i}\in\left[\underline{\theta},\overline{\theta}\right],} & {}\label{eq:creditorParticipationConstraint}
\end{flalign}
for them to participate in the mechanism since debt relief can only
be granted willfully outside of bankruptcy, defining the individually
rational (IR) participation constraint: in other words, no party is
forced to participate in the mechanism, and each creditor is granted
veto power over the debt settlement via their joint control of the
probability $k\left(\right)$ thus effectively imposing negative externalities
on each other (note that extensions to this model introducing majority
voting rules are also possible).

On the other hand, the expected utility of the debtor entity from
the mechanism $\Gamma=\left(t,k\right)$ in a truth-telling equilibrium
is
\begin{equation}
V\left(\theta,k,t\right)=\int_{I_{n}}k\left(\theta\right)\left(A-\sum_{i=1}^{n}t_{i}\left(\theta\right)\right)\phi\left(\theta\right)d\theta,\label{eq:debtorExpectedUtility}
\end{equation}
also subject to a positive Ex-Ante Budget Balance (EXABB) participation
constraint

\begin{flalign}
\text{(EXABB)} &  &  & {V\left(\theta,k,t\right)\geq0} & {}\label{eq:debitorParticipationConstraint}
\end{flalign}

An additional incentive-compatibility constraint (IC) for the mechanism
$\Gamma=\left(t,k\right)$ must be satisfied for truthful reporting
to be an equilibrium, which in conjunction with the individually rational
(IR) participation constraint \ref{eq:creditorParticipationConstraint}
define the mechanism as a feasible mechanism: 

\begin{flalign}
\text{(IC)} &  &  & {U\left(\theta_{i},\hat{\theta}_{i},t_{i},k\right)\geq U\left(\theta_{i},\theta_{j},t_{i},k\right),\,\text{such that }\forall\theta_{i},\theta_{j}\in\left[\underline{\theta},\overline{\theta}\right],\forall i,j.} & {}\label{eq:incentive-compatible-truthful-constraint}
\end{flalign}

To obtain incentive-compatibility \cite{myersonOptimalAuctionDesign},
it's a first necessary and sufficient condition that
\[
K\left(\theta_{i}\right)=\int_{I_{n-1}}k\left(\theta_{i},\theta_{i-1}\right)\phi\left(\theta_{-i}\right)d\theta_{-i}
\]
must be decreasing in $\theta_{i}$ for all $\theta_{i}\in\left[\underline{\theta},\overline{\theta}\right]$.
In other words, the higher the expected recovery value of a creditor
the smaller the probability that the entity will remain solvent as
expected by that same creditor. The other second necessary and sufficient
condition derived from \cite{myersonOptimalAuctionDesign} is that
$\forall\theta_{i}$,
\begin{equation}
U\left(\theta_{i},\theta_{i},t_{i},k\right)=U\left(\bar{\theta},\bar{\theta},t_{i},k\right)+\int_{I_{n-1}}\int_{\theta_{i}}^{\bar{\theta}}k\left(u,\theta_{-i}\right)du\phi\left(\theta_{-i}\right)d\theta_{-i}\geq0,\label{eq:secondConditionIncentiveCompatibility}
\end{equation}
granting, for the highest type $\overline{\theta}$, the expected
change in utility to each creditor $i$ plus a positive mark-up: the
new rightmost term of the right-hand side is an economic incentive
given to creditors of type $\theta_{i}<\bar{\theta}$ by the mechanism
to induce them to reveal that they are the creditors more inclined
to grant debt forgiveness (i.e., an ``informational rent''). To
derive the equation \ref{eq:secondConditionIncentiveCompatibility},
we start denoting by $\hat{U}\left(\theta_{i,}\theta_{i,},k,t\right)$
as the maximum utility of creditor $i$ from the mechanism $\left(k,t\right)$:
then, by the envelope theorem\cite{samuelsonFundamentals,milgromAuctionTheory,generalisedEnvelope},
\[
\frac{d\hat{U}}{d\theta_{i}}=-\int_{I_{n-1}}k\left(\theta_{i},\theta_{-i}\right)\phi\left(\theta_{-i}\right)d\theta_{-i}\leq0
\]
Therefore, by reintegration we obtain the previous equation \ref{eq:secondConditionIncentiveCompatibility}
\begin{align*}
\hat{U}\left(\theta_{i},\theta_{i},k,t\right) & =U\left(\bar{\theta},\bar{\theta},t,k\right)+\int_{I_{n-1}}\left(\int_{\theta_{i}}^{\bar{\theta}}\frac{d\hat{U}}{du}du\right)\phi\left(\theta_{-i}\right)d\theta_{-i}\\
 & =U\left(\bar{\theta},\bar{\theta},t,k\right)+\int_{I_{n-1}}\int_{\theta_{i}}^{\bar{\theta}}p\left(u,\theta_{-i}\right)du\phi\left(\theta_{-i}\right)d\theta_{-i}
\end{align*}

Now let's look at the incentive-compatible mechanism most favourable
to the debtor, that is, when $U\left(\bar{\theta},\bar{\theta},t_{i},k\right)=0$:
then, the amount the debtor will pay is the expected recovery value
plus an additional economic incentive derived from \ref{eq:secondConditionIncentiveCompatibility},
\begin{equation}
\int_{I_{n}}k\left(\theta\right)t_{i}\left(\theta_{i},\theta_{-i}\right)\phi\left(\theta\right)d\theta=\int_{I_{n}}k\left(\theta\right)\left[l_{i}\left(\theta_{i},\theta_{-i}\right)+\frac{F\left(\theta_{i}\right)}{\phi\left(\theta_{i}\right)}\right]\phi\left(\theta\right)d\theta\label{eq:debtorExpectedProfits}
\end{equation}

To derive equation \ref{eq:debtorExpectedProfits}, we start from
\ref{eq:changeExpectedUtilityMechanism} and \ref{eq:secondConditionIncentiveCompatibility},
\begin{align*}
\int_{I_{n-1}}k\left(\theta_{i},\theta_{-i}\right)t_{i}\left(\theta_{i},\theta_{-i}\right)\phi\left(\theta_{-i}\right)d\theta_{-i} & =\int_{I_{n-1}}k\left(\theta_{i},\theta_{-i}\right)l_{i}\left(\theta_{i},\theta_{-i}\right)\phi\left(\theta_{-i}\right)d\theta_{i}\\
 & +\int_{I_{n-1}}\int_{\theta_{i}}^{\bar{\theta}}k\left(u,\theta_{-i}\right)du\phi\left(\theta_{-i}\right)d\theta_{-i}
\end{align*}
By taking expectations over all possible values of $\theta_{i}$,
we obtain
\begin{align}
\int_{I_{n}}k\left(\theta\right)t_{i}\left(\theta_{i},\theta_{-i}\right)\phi\left(\theta\right)d\theta & =\int_{I_{n}}k\left(\theta_{i},\theta_{-i}\right)l_{i}\left(\theta_{i},\theta_{-i}\right)\phi\left(x\right)d\theta\nonumber \\
 & +\int_{I_{n.1}}\left[\int_{\underline{\theta}}^{\overline{\theta}}\int_{\theta_{i}}^{\overline{\theta}}k\left(u,\theta_{-i}\right)du\phi\left(\theta_{i}\right)d\theta_{i}\right]\phi\left(\theta_{i}\right)d\theta_{-i}\label{eq:derivationExpectedInformationalRent}
\end{align}
Integrating by parts the term in brackets, we arrive at
\[
\int_{\underline{\theta}}^{\overline{\theta}}\int_{\theta_{i}}^{\overline{\theta}}k\left(u,\theta_{-i}\right)du\phi\left(\theta_{i}\right)d\theta_{i}=\int_{\underline{\theta}}^{\overline{\theta}}k\left(\theta_{i},\theta_{-i}\right)F\left(\theta_{i}\right)d\theta_{i}
\]
Thus, equation \ref{eq:derivationExpectedInformationalRent} can be
rewritten as the desired equation \ref{eq:debtorExpectedProfits}

\begin{align*}
\int_{I_{n}}k\left(\theta\right)t_{i}\left(\theta_{i},\theta_{-i}\right)f\left(\theta\right)d\theta & =\int_{I_{n}}k\left(\theta_{i},\theta_{-i}\right)l_{i}\left(\theta_{i},\theta_{-i}\right)f\left(\theta\right)d\theta\\
 & +\int_{I_{n}}k\left(\theta_{i},\theta_{-i}\right)\left[F\left(\theta_{i}\right)/\phi\left(\theta_{i}\right)\right]\phi\left(\theta\right)d\theta\\
 & =\int_{I_{n}}k\left(\theta\right)\left[l_{i}\left(\theta_{i},\theta_{-i}\right)+\frac{F\left(\theta_{i}\right)}{\phi\left(\theta_{i}\right)}\right]\phi\left(\theta\right)d\theta
\end{align*}

By substituting the previous relation on the debtor's expected utility
\ref{eq:debtorExpectedUtility}, it becomes 
\begin{equation}
\int_{I_{n}}k\left(\theta\right)\left(A-\sum_{i=1}^{n}\left[l_{i}\left(\theta_{i},\theta_{-i}\right)+\frac{F\left(\theta_{i}\right)}{\phi\left(\theta_{i}\right)}\right]\right)\phi\left(\theta\right)d\theta\label{eq:newDebtorsExpectedUtility}
\end{equation}

\begin{thm}
\label{thm:positiveDebtorExpectedUtility} A mechanism $\Gamma=\left(t,k\right)$
is incentive-compatible according to constraint \ref{eq:incentive-compatible-truthful-constraint}
(IC) and satisfies the creditor's participation constraint \ref{eq:creditorParticipationConstraint}
(IR) and the debtor's participation constraint \ref{eq:debitorParticipationConstraint}
(EXABB) if and only if the debtor's expected utility is strictly positive,
\begin{equation}
\int_{I_{n}}k\left(\theta\right)\left(A-\sum_{i=1}^{n}\left[l_{i}\left(\theta_{i},\theta_{-i}\right)+\frac{F\left(\theta_{i}\right)}{\phi\left(\theta_{i}\right)}\right]\right)\phi\left(\theta\right)d\theta\geq0\label{eq:positiveDebtorExpectedUtility}
\end{equation}
\end{thm}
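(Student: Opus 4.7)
The plan is to read this theorem as a Myerson-style biconditional characterisation and to prove both directions by organising the envelope identity and the informational-rent formula that have already been derived in the text.

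For the ``only if'' direction I would start from (IC) and apply the envelope condition \ref{eq:secondConditionIncentiveCompatibility} to the expected rent of each creditor, then repeat the integration-by-parts step that produced \ref{eq:debtorExpectedProfits}, but \emph{without} setting $U(\bar{\theta},\bar{\theta},t_i,k)=0$. Tracking the boundary term through the same computation yields the identity
\[
V(\theta,k,t)=\int_{I_n}k(\theta)\left(A-\sum_{i=1}^n\left[l_i(\theta_i,\theta_{-i})+\frac{F(\theta_i)}{\phi(\theta_i)}\right]\right)\phi(\theta)\,d\theta-\sum_{i=1}^n U(\bar{\theta},\bar{\theta},t_i,k).
\]
The monotonicity of $K(\theta_i)$ forced by (IC) makes the map $\theta_i\mapsto U(\theta_i,\theta_i,t_i,k)$ non-increasing, so (IR) at every type collapses to the single boundary inequality $U(\bar{\theta},\bar{\theta},t_i,k)\ge 0$ for each $i$. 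Combining these with (EXABB), $V\ge 0$, and rearranging the identity above gives \ref{eq:positiveDebtorExpectedUtility}.

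For the ``if'' direction I would turn the same identity into a construction: for any $k$ satisfying the monotonicity condition on $K(\theta_i)$, define the transfers pointwise via the envelope representation and normalise with the boundary choice $U(\bar{\theta},\bar{\theta},t_i,k)=0$. The envelope representation automatically enforces (IC), and (IR) is immediate because $U$ is non-increasing in $\theta_i$ and vanishes at $\bar{\theta}$; under this normalisation the identity reduces to $V=\int_{I_n}k(\theta)(A-\sum_i[l_i+F/\phi])\phi(\theta)\,d\theta$, which is nonnegative by hypothesis, so (EXABB) holds.

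The main obstacle is essentially careful bookkeeping rather than any new idea: the identity relating $V$, the virtual-surplus integrand and the boundary rents must be obtained by tracking the endpoint constants $U(\bar{\theta},\bar{\theta},t_i,k)$ through the integration by parts that the text had quietly dropped when specialising to the debtor-optimal mechanism. On the constructive side, the only genuinely non-trivial step is verifying that the pointwise transfers produced by the envelope representation are internally consistent with (IC); this is standard Myerson once the monotonicity of $K(\theta_i)$ is available, but should be spelled out.
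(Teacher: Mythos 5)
Your proposal is correct in substance but follows a genuinely different, and far more complete, route than the paper. The paper's own proof consists of a single observation: by (EXABB) $V(\theta,k,t)\geq 0$, and since the display \ref{eq:newDebtorsExpectedUtility} ``is derived from'' $V$, the inequality \ref{eq:positiveDebtorExpectedUtility} ``trivially follows.'' This establishes only the ``only if'' direction, and even that incompletely: the identification of $V$ with the virtual-surplus integral in \ref{eq:newDebtorsExpectedUtility} was obtained in the text under the special normalisation $U(\bar{\theta},\bar{\theta},t_i,k)=0$, whereas for a general feasible mechanism one has $V=\int_{I_n}k(\theta)\bigl(A-\sum_i[l_i+F/\phi]\bigr)\phi\,d\theta-\sum_i U(\bar{\theta},\bar{\theta},t_i,k)$, exactly the identity you derive by carrying the boundary terms through the integration by parts. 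Your argument closes that gap (the virtual surplus is $V$ \emph{plus} the nonnegative boundary rents, hence nonnegative) and, unlike the paper, also supplies the converse by the standard Myerson construction of transfers from the envelope representation. What your approach buys is an actual proof of the stated biconditional; what the paper's buys is only brevity.

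One residual point you should make explicit: the ``if'' direction cannot hold for the theorem as literally stated, since the sign of a single integral involving only $k$ cannot force an arbitrary given $t$ to satisfy (IC). Your reading --- that the converse asserts the \emph{existence} of transfers making $(t,k)$ feasible --- is the only tenable one, but it additionally requires that $K(\theta_i)$ be decreasing, a hypothesis absent from the right-hand side of the equivalence. You flag this by restricting to ``any $k$ satisfying the monotonicity condition,'' which is the right fix, but it means the clean biconditional of the theorem statement is really ``monotonicity of $K$ plus nonnegative virtual surplus $\iff$ existence of feasible transfers''; this is a defect of the theorem as stated rather than of your argument, and is worth saying out loud.
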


\begin{proof}
Note that by the debtor's participation constraint \ref{eq:debitorParticipationConstraint}
(EXABB),
\[
V\left(\theta,k,t\right)\geq0
\]
and that \ref{eq:newDebtorsExpectedUtility} is derived from the debtor's
expected utility \ref{eq:debtorExpectedUtility}, then it trivially
follows that \ref{eq:positiveDebtorExpectedUtility} must also be
positive.
\end{proof}
\begin{defn}
(\textit{Blessing of the debtor})\label{def:(Debtor's-blessing).}.
The willingness of creditors to grant debt forgiveness is higher when
using secure computation for their private information. This paradoxical
situation is the opposite of the ``\textit{winner's curse}'' from
auction theory \cite{counteringWinnerCurse}: given that an agreement
on debt forgiveness among creditors must be unanimous \ref{eq:creditorParticipationConstraint}
, then debt forgiveness could only happen if all creditors are willing
to grant forgiveness because their expected recovery value from bankruptcy
is low, thus the willingness of each creditor gets reinforced from
the private information obtained from other creditors.
\end{defn}

\begin{thm}
\label{thm:privateInformationHigherDebtorExpectedProfits} The expected
profits of the debtor are higher when there are differences in private
information, and not just differences in preferences, for any incentive-compatible
$k\left(\theta\right)$.
\end{thm}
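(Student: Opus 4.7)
The plan is to isolate exactly how the revision functions $e_i$ enter the debtor's expected profits. Starting from equation \ref{eq:newDebtorsExpectedUtility}, I would compare the ``private information'' regime, where $l_i(\theta_i,\theta_{-i}) = \theta_i + \sum_{j\neq i} e_i(\theta_j)$, with the ``preferences only'' baseline, where $l_i = \theta_i$ (equivalently $e_i\equiv 0$). The constant $A$ and the hazard term $F(\theta_i)/\phi(\theta_i)$ are untouched by this change, so the difference in the debtor's expected profit between the two regimes collapses to
\[
\Delta V \;=\; -\int_{I_n} k(\theta)\sum_{i=1}^{n}\sum_{j\neq i} e_i(\theta_j)\,\phi(\theta)\,d\theta,
\]
and proving the theorem reduces to showing $\Delta V \geq 0$ for every incentive-compatible $k$.

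Next, I would apply Fubini to each summand, integrating out all coordinates other than $\theta_j$ first, to rewrite
\[
\int_{I_n} k(\theta)\,e_i(\theta_j)\,\phi(\theta)\,d\theta \;=\; \int_{\underline{\theta}}^{\overline{\theta}} e_i(\theta_j)\,K_j(\theta_j)\,\phi(\theta_j)\,d\theta_j,
\]
where $K_j(\theta_j) := \int_{I_{n-1}} k(\theta)\,\phi(\theta_{-j})\,d\theta_{-j}$ is the marginal continuation probability conditioned on creditor $j$'s type. The first necessary incentive-compatibility condition (stated just before \ref{eq:secondConditionIncentiveCompatibility}) says exactly that $K_j$ is non-increasing in $\theta_j$, while $e_i$ is non-decreasing by its defining assumption.

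I would then invoke a Chebyshev-type correlation inequality: for any non-decreasing $f$ and non-increasing $g$ on the probability space $([\underline{\theta},\overline{\theta}],\phi)$, one has $\int fg\,\phi \leq (\int f\,\phi)(\int g\,\phi)$, provable by integrating the pointwise identity $(f(x)-f(y))(g(x)-g(y))\leq 0$ against $\phi(x)\phi(y)$. The revision condition \ref{eq:revisionFunctionCondition} gives $\int e_i(\theta_j)\,\phi(\theta_j)\,d\theta_j = 0$, so each Fubini'd integral above is $\leq 0$; negating and summing over the $n(n-1)$ pairs $(i,j)$ delivers $\Delta V \geq 0$.

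The main obstacle will be upgrading $\Delta V \geq 0$ to the strict inequality that the word ``higher'' in the statement suggests. I plan to argue that whenever secure computation is actually informative (at least one $e_i$ is not $\phi$-almost-surely zero) and the incentive-compatible $k$ genuinely separates types (so $K_j$ is strictly decreasing on a set of positive measure, as any non-degenerate schedule must be), the pointwise Chebyshev inequality is strict on a set of positive $\phi\otimes\phi$-measure, making $\Delta V$ strictly positive. The correlation step is the essential ingredient; the remainder is bookkeeping with Fubini and linearity of expectation.
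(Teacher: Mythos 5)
Your proposal is correct and follows essentially the same route as the paper: both isolate the revision terms, use Fubini to reduce the claim to showing $\sum_{i}\sum_{j\neq i}\int_{\underline{\theta}}^{\overline{\theta}} K(\theta_j)\,e_i(\theta_j)\,\phi(\theta_j)\,d\theta_j \leq 0$, and conclude from the negative correlation of the non-decreasing, mean-zero $e_i$ with the non-increasing $K$ implied by incentive compatibility. The only differences are minor: the paper proves the correlation inequality via the single-crossing point $\hat{\theta}_i$ where $e_i$ changes sign rather than your symmetrized Chebyshev identity, and you are more careful than the paper about the non-degeneracy hypotheses needed to upgrade the conclusion to a strict inequality.
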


\begin{proof}
A creditor is more willing to grant debt forgiveness when there are
differences in private information (i.e., \textit{the blessing of
the debtor} \ref{def:(Debtor's-blessing).}) because the expected
recovery value conditional on a successful settlement is smaller than
the unconditional expected recovery value, given that all creditors
consider that debt settlements will only prosper when other debtholders
expect a low recovery value. This allows the debtor entity to extract
more debt forgiveness from creditors under private information.

Debtor's expected profits are given by \ref{eq:debtorExpectedProfits}
if the investment rule $k\left(\theta\right)$ is incentive-compatible.
When there are only differences in preferences, the revision functions
are $e_{i}\left(\theta_{j}\right)=0$ for all values of $i,j,$ and
$\theta_{j}$, thus we only need to show that 
\[
\int_{I_{n}}k\left(\theta\right)\left[\sum_{i=1}^{n}e_{i}\left(\theta_{-i}\right)\right]\phi\left(\theta\right)d\theta<0
\]
with $e_{i}\left(\theta_{-i}\right)=\sum_{j\neq i}e_{i}\left(\theta_{j}\right)$.
By the definition of $K\left(\theta_{i}\right)$, the inequality can
be rewritten as
\[
\sum_{i=1}^{n}\sum_{j\neq i}\left[\int_{\underline{\theta}}^{\bar{\theta}}K\left(\theta_{j}\right)e_{i}\left(\theta_{j}\right)\phi\left(\theta_{j}\right)d\theta_{j}\right]<0.
\]

Since $e_{i}\left(\theta_{i}\right)$ is increasing and by definition
$\int_{\underline{\theta}}^{\overline{\theta}}e_{i}\left(\theta_{j}\right)\phi\left(\theta_{j}\right)d\theta_{j}=0$
for all $i$ and $j$, there exists an $\hat{\theta}_{i}$ such that
$e_{i}\left(\theta_{j}\right)\lessgtr0$ as $\theta_{j}\lessgtr\hat{\theta}_{i}$,
and we can write
\[
\int_{\underline{\theta}}^{\hat{\theta_{i}}}K\left(\hat{\theta}_{i}\right)e_{i}\left(\theta_{j}\right)\phi\left(\theta_{j}\right)d\theta_{j}+\int_{\hat{\theta}_{i}}^{\bar{\theta}}K\left(\hat{\theta}_{i}\right)e_{i}\left(\theta_{j}\right)\phi\left(\theta_{j}\right)d\theta_{j}=0
\]

Due to incentive-compatibility we know that $K\left(\theta_{j}\right)$
is decreasing, hence 
\[
\int_{\underline{\theta}}^{\hat{\theta}_{i}}K\left(\theta_{i}\right)e_{i}\left(\theta_{j}\right)\phi\left(\theta_{j}\right)d\theta_{j}<\int_{\underline{\theta}}^{\hat{\theta}_{i}}K\left(\hat{\theta}_{i}\right)e_{i}\left(\theta_{j}\right)\phi\left(\theta_{j}\right)d\theta_{j},
\]
and
\[
\int_{\hat{\theta_{i}}}^{\overline{\theta}}K\left(\hat{\theta}_{i}\right)e_{i}\left(\theta_{j}\right)\phi\left(\theta_{j}\right)d\theta_{j}>\int_{\hat{\theta_{i}}}^{\overline{\theta}}K\left(\theta_{i}\right)e_{i}\left(\theta_{j}\right)\phi\left(\theta_{j}\right)d\theta_{j}.
\]
But then 
\[
\int_{\underline{\theta}}^{\bar{\theta}}K\left(\theta_{j}\right)e_{i}\left(\theta_{j}\right)\phi\left(\theta_{j}\right)d\theta_{j}<K\left(\hat{\theta}_{i}\right)\int_{\underline{\theta}}^{\bar{\theta}}e_{i}\left(\theta_{j}\right)\phi\left(\theta_{j}\right)d\theta_{j}=0.
\]
\end{proof}
A recent impossibility result \cite{jehielMoldavanu} further limits
mechanisms' ability to implement efficient allocations when creditors'
private values aren't private information (i.e., with secure computation
as used here).
\begin{thm}
(Jehiel-Moldovanu Impossibility Theorem \cite{jehielMoldavanu,milgromAuctionTheory}).\label{thm:Jehiel-Moldovanu-Impossibility}
Let $\Gamma(\theta)$ be a mechanism where the liquidation value $l_{i}\left(\theta_{i},\theta_{i-1}\right)$
depends on the valuations of other creditors but without private information,
and suppose that the function $E\left(\theta^{i}\right)\equiv E\left[\Gamma^{i}\left(\theta\right)|\theta^{i}\right]$
depends non-trivially on $\theta_{-i}^{i}$. Then, no mechanism exists
that implements $\Gamma$ at any Bayes-Nash equilibrium.
\end{thm}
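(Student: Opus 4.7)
The plan is to execute the standard envelope-plus-integrability argument that underlies the Jehiel-Moldovanu impossibility. First, I fix any candidate Bayes-Nash implementable mechanism $\Gamma$ and, for each creditor $i$ with multidimensional private type $\theta^i$, introduce the interim payoff $V^i(\hat{\theta}^i;\theta^i) = E[\Gamma^i(\hat{\theta}^i,\theta^{-i}) \mid \theta^i]$ from reporting $\hat{\theta}^i$ when the true type is $\theta^i$. Bayes-Nash incentive compatibility requires truth-telling to maximise $V^i(\cdot;\theta^i)$, so the envelope theorem (as invoked via \cite{generalisedEnvelope,milgromAuctionTheory}) yields that the gradient of the equilibrium payoff $W^i(\theta^i) := V^i(\theta^i;\theta^i)$ along any coordinate of $\theta^i$ equals the partial of $V^i$ taken only through the direct valuation channel, with the reporting-channel derivative vanishing at the truthful report.

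Next, I would differentiate $W^i$ with respect to two different coordinates of $\theta^i$ and impose Schwarz's theorem on the resulting mixed partials. Because the envelope first-order condition must hold as an identity in $\theta^i$, equating the two orderings of differentiation produces an integrability (path-independence) condition relating the sensitivities of the interim allocation $E(\theta^i)$ to different coordinates of $\theta^i$. Critically, this condition constrains how $E(\theta^i)$ may depend on $\theta_{-i}^i$, the coordinates of $i$'s type that enter other creditors' liquidation values $l_i(\theta_i,\theta_{-i})$ but not $i$'s own direct valuation.

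Finally, I would invoke the hypothesis that $E(\theta^i)$ depends non-trivially on $\theta_{-i}^i$: because the interdependencies in $l_i$ couple creditors only through the cross-type channel that the envelope condition cannot compensate for, the integrability requirement reduces to a linear restriction whose only admissible solutions are mechanisms with trivial cross-dependence. The non-triviality hypothesis is precisely what makes this system inconsistent, yielding the contradiction that closes the proof in the manner of \cite{jehielMoldavanu,milgromAuctionTheory}.

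The main obstacle, I expect, is this third step: making the phrase \emph{depends non-trivially on} $\theta_{-i}^i$ rigorous enough to force an outright failure of Schwarz symmetry, rather than merely a constraint that could happen to be satisfied by a clever choice of transfers. A secondary technical annoyance is checking that the single-good, single-allocation structure of $\mathcal{F}_{YOVEL}$ carries enough dimensionality in the creditors' type vectors for the obstruction to bite at all; if not, the hypothesis of the theorem is vacuous in our setting and the implication holds trivially. Boundary effects at $\underline{\theta}$ and $\overline{\theta}$ can be absorbed by the standard smoothness assumptions on $F$ and $\phi$ already in force.
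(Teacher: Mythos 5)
First, a point of orientation: the paper does not prove this statement at all --- it imports the Jehiel--Moldovanu result verbatim from \cite{jehielMoldavanu,milgromAuctionTheory} as an external theorem, so there is no in-paper proof to compare yours against. Your sketch does follow the standard architecture of the original proof (Bayes--Nash incentive compatibility $\Rightarrow$ envelope/first-order condition on the interim payoff $W^i(\theta^i)=V^i(\theta^i;\theta^i)$ $\Rightarrow$ symmetry of mixed partials as an integrability obstruction), so the skeleton is the right one.

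However, the proposal has a genuine gap, and it is exactly the one you flag yourself: the third step is never executed. The entire content of the theorem lives in showing that ``$E(\theta^i)$ depends non-trivially on $\theta^i_{-i}$'' is \emph{incompatible} with the integrability condition, rather than merely constrained by it. You assert that the Schwarz condition ``reduces to a linear restriction whose only admissible solutions are mechanisms with trivial cross-dependence,'' but you neither write down that restriction nor show why non-trivial dependence violates it; in Jehiel--Moldovanu this is where all the work is (one must exploit that creditor $i$'s own utility is insensitive to the components $\theta^i_{-i}$ of his signal that enter only \emph{other} creditors' liquidation values, so the reporting channel for those components carries no countervailing incentive, forcing the interim allocation to be constant in them --- a contradiction). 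As written, your argument establishes only that a constraint exists, not that it is violated. Your secondary worry is also substantive and should not be waved away: in this paper's model each creditor's type $\theta_i$ is a scalar on $[\underline{\theta},\overline{\theta}]$, so the decomposition of $\theta^i$ into an own-value component and a cross-value component $\theta^i_{-i}$ --- which the multidimensional mixed-partials argument requires --- is not available, and you would need to either argue the hypothesis is vacuous here (making the theorem true but empty in this setting) or reformulate the model with multidimensional signals before the obstruction can bite. Until both of these are resolved, this is a roadmap rather than a proof.
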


\begin{cor}
The use of secure computation techniques enables debt settlements
with higher expected profits, thus attaining the ``blessing of the
debtor'' \ref{def:(Debtor's-blessing).} with efficient allocations
for creditors.
\end{cor}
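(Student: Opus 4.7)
The plan is to assemble this corollary directly from the three results that immediately precede it, since the corollary is essentially the conjunction of their consequences rephrased in terms of secure computation. First I would set up the contrast between two regimes: the "public information" regime, in which each $\theta_i$ (or at least the induced liquidation values $l_i(\theta_i,\theta_{-i})$) is commonly known, and the "private information" regime, in which each $\theta_i$ stays with creditor $i$ and is only aggregated via the ideal functionality $\mathcal{F}_{YOVEL}$. The point to stress up front is that secure computation is exactly the mechanism that realizes the second regime in the real model, matching the trusted-party ideal model of Figure \ref{fig:Ideal-functionality-Yovel}.

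Next I would invoke Theorem \ref{thm:privateInformationHigherDebtorExpectedProfits} to get the strict inequality on expected debtor profits: for any incentive-compatible investment rule $k(\theta)$, the debtor's expected profits \eqref{eq:debtorExpectedProfits} are strictly larger in the private-information regime than in the pure-preferences regime, because $\int_{I_n} k(\theta)\bigl[\sum_i e_i(\theta_{-i})\bigr]\phi(\theta)d\theta < 0$. Combined with Theorem \ref{thm:positiveDebtorExpectedUtility} (which guarantees the mechanism remains feasible, i.e., \eqref{eq:positiveDebtorExpectedUtility} stays satisfied), this delivers the "higher expected profits" half of the conclusion and matches the content of Definition \ref{def:(Debtor's-blessing).}.

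Then I would invoke the Jehiel--Moldovanu impossibility Theorem \ref{thm:Jehiel-Moldovanu-Impossibility} in the contrapositive direction to handle the "efficient allocations" half: if the liquidation values $l_i(\theta_i,\theta_{-i})$ depend non-trivially on the $\theta_{-i}$ but nothing is kept private, then no mechanism implementing $\Gamma$ exists at any Bayes--Nash equilibrium, so in particular no ex-post efficient (per Definition \ref{def:(ex-post-efficiency)}) truthful mechanism exists. Secure computation circumvents this obstruction precisely because it preserves the informational asymmetry on $\theta_i$ while still allowing the aggregate $l_i$ to be computed, so the hypothesis of the impossibility theorem fails and the direct mechanism of Section \ref{sec:Detailed-Design} remains admissible.

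The main obstacle I anticipate is not a computational one but a conceptual one: making precise the claim that secure computation "enables" the private-information regime without collapsing into the public regime. I would handle this by appealing to the standard real/ideal simulation paradigm, arguing that the MPC realization of $\mathcal{F}_{YOVEL}$ reveals nothing to creditor $i$ beyond what is already revealed by the mechanism's prescribed output $(t_i, k)$, so the $\theta_{-i}$ remain private in the sense required by Theorem \ref{thm:privateInformationHigherDebtorExpectedProfits} and outside the scope of Theorem \ref{thm:Jehiel-Moldovanu-Impossibility}. Combining the two theorems then yields both the higher profits and the ex-post efficient allocation, which is exactly the "blessing of the debtor."
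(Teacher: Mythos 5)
Your proposal follows essentially the same route as the paper's own (very terse) proof: invoke Theorem \ref{thm:privateInformationHigherDebtorExpectedProfits} for the higher expected profits under private information, and Theorem \ref{thm:Jehiel-Moldovanu-Impossibility} to conclude that efficient allocations require that privacy, with secure computation being what realizes the private-information regime. Your additional framing via the real/ideal simulation paradigm and the feasibility check against Theorem \ref{thm:positiveDebtorExpectedUtility} only makes explicit what the paper leaves implicit, so there is no substantive divergence.
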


\begin{proof}
Follows from previous Theorem \ref{thm:privateInformationHigherDebtorExpectedProfits},
since secure computation enables computation on private information
between multiple distrusting parties. Additionally, efficient allocations
for creditors are only possible with private information by Theorem
\ref{thm:Jehiel-Moldovanu-Impossibility}.
\end{proof}
Additionally Theorem \ref{thm:privateInformationHigherDebtorExpectedProfits}
implies that condition \ref{eq:positiveDebtorExpectedUtility} of
Theorem \ref{thm:positiveDebtorExpectedUtility} will be satisfied
with higher probability when there are differences in private information,
making debt settlements more efficient.

\subsection{An Optimal Revelation Mechanism}

We start choosing the investment rule $k\left(\theta\right)$ to maximise
the debtor's expected utility given by \ref{eq:newDebtorsExpectedUtility},
\[
\hat{k}\left(\theta\right)=\begin{cases}
1 & \text{if }A\geq\sum_{i=1}^{n}\left[l_{i}\left(\theta_{i},\theta_{-i}\right)+\frac{F\left(\theta_{i}\right)}{\phi\left(\theta_{i}\right)}\right]\\
0 & \text{otherwise}
\end{cases}
\]
In other words, the entity will remain solvent if the continuation
value is higher than the sum of the expected recovery values in bankruptcy,
plus the terms $\left(F\left(x_{i}\right)/\phi\left(\theta_{i}\right)\right)$:
this investment rule $\hat{k}\left(\theta\right)$ and the condition
that $U\left(\bar{\theta},\bar{\theta},t,k\right)=0$ would set debtor's
expected profits. But we also need an explicit solution for $t\left(\theta\right)$,
thus the investment rule $\hat{k}\left(\theta\right)$ is rewritten
as
\[
\hat{k}\left(\theta\right)=\begin{cases}
1 & \text{if }C\geq B\left(\theta_{i}\right)+Q\left(\theta_{-i}\right)\\
0 & \text{otherwise}
\end{cases}
\]
where
\[
B\left(\theta_{i}\right)=\theta_{i}+\frac{F\left(\theta_{i}\right)}{\phi\left(\theta_{i}\right)}+\sum_{j\neq i}e_{j}\left(\theta_{i}\right)
\]
and
\[
Q\left(\theta_{-i}\right)=\sum_{j=1}^{n}\sum_{k\neq j,i}e_{j}\left(\theta_{k}\right)+\sum_{j\neq i}\left[\theta_{j}+\frac{F\left(\theta_{j}\right)}{\phi\left(\theta_{j}\right)}\right]
\]
We make the following assumption so $B\left(x_{i}\right)$ is strictly
increasing.

\begin{assumption}{1} $F\left(\theta_{i}\right)/\phi\left(\theta_{i}\right)$ is strictly increasing (i.e., monotonically non-decreasing) for all $\theta_{i}\in\left[\underline{\theta},\overline{\theta}\right]$ and all $i=1,\dots,N$.    \end{assumption}

This assumption is satisfied by any uniform distribution on the interval
$0\leq\underline{\theta}<\overline{\theta}<\infty$, the Pareto, exponential,
and positive normal distributions.

Thus, there exists a unique value of $\theta_{i}$ denoted by $\tilde{\theta}=\tilde{\theta}\left(\theta_{-i}\right)$
for each vector $\theta_{-i}$ that solves 
\[
A=B\left(\theta_{i}\right)+Q\left(\theta_{-i}\right)
\]
The pivotal type $\tilde{\theta}\left(\theta_{-i}\right)$ that creditor
$i$ can report without forcing bankruptcy according to investment
rule $\hat{k}\left(\theta\right)$ is large when low types $\theta_{-i}$
are being reported by the other creditors: it measures the magnitude
of the opportunities for holding out. It's defined by
\[
\tilde{\theta}\left(\theta_{-i}\right)=\left\{ \text{min }\theta_{i}|A=\sum_{i=1}^{n}\left[\theta_{i}+e\left(\theta_{-i}\right)+\frac{F\left(\theta_{i}\right)}{\phi\left(\theta_{i}\right)}\right]\right\} 
\]

\begin{thm}
\label{thm:optimal-mechanism}The payment to creditor $i$ is calculated
according to
\[
\hat{t}_{i}\left(\theta_{i},\theta_{-i}\right)=\hat{t}_{i}\left(\theta_{-i}\right)=\tilde{\theta}\left(\theta_{-i}\right)+e_{i}\left(\theta_{-i}\right)
\]
in the optimal mechanism.
\end{thm}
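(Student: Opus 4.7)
The plan is to verify that the proposed payment rule $\hat{t}_i(\theta_{-i}) = \tilde{\theta}(\theta_{-i}) + e_i(\theta_{-i})$, paired with the investment rule $\hat{k}$ already derived, constitutes the optimal incentive-compatible direct mechanism by checking three things: (i) that it implements $\hat{k}$ as a truth-telling equilibrium, (ii) that it satisfies the boundary condition $U(\bar{\theta},\bar{\theta},\hat{t},\hat{k})=0$ singled out earlier as the debtor-maximal case, and (iii) that the resulting expected payment matches the formula in equation \ref{eq:debtorExpectedProfits}, so that the debtor's expected utility attains the maximum of \ref{eq:newDebtorsExpectedUtility}.

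First I would observe that $\hat{k}$ is, under Assumption 1, a threshold rule in creditor $i$'s own report: $\hat{k}(\hat{\theta}_i,\theta_{-i})=1$ iff $\hat{\theta}_i\leq\tilde{\theta}(\theta_{-i})$, since $B(\theta_i)$ is strictly increasing in $\theta_i$. Because the candidate $\hat{t}_i$ depends only on $\theta_{-i}$, a report $\hat{\theta}_i$ by creditor $i$ only affects whether investment is triggered; conditional on $\theta_{-i}$, the net gain from solvency over bankruptcy is
\[
\hat{t}_i(\theta_{-i}) - l_i(\theta_i,\theta_{-i}) = \bigl[\tilde{\theta}(\theta_{-i})+e_i(\theta_{-i})\bigr] - \bigl[\theta_i+e_i(\theta_{-i})\bigr] = \tilde{\theta}(\theta_{-i}) - \theta_i,
\]
which is nonnegative exactly when $\theta_i\leq\tilde{\theta}(\theta_{-i})$. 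Hence truthful reporting makes investment happen precisely when creditor $i$ prefers it, so truth-telling is (ex-post) optimal for creditor $i$ against any realization of $\theta_{-i}$. This establishes (IC) in the strongest possible form and immediately yields $U(\theta_i,\theta_i,\hat{t},\hat{k})\geq 0$, i.e., the (IR) constraint \ref{eq:creditorParticipationConstraint}.

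Next I would verify the boundary condition. For $\theta_i=\bar{\theta}$, either $\bar{\theta}>\tilde{\theta}(\theta_{-i})$ so $\hat{k}=0$ and the change in utility \ref{eq:changeExpectedUtilityMechanism} is zero, or $\bar{\theta}=\tilde{\theta}(\theta_{-i})$ at the knife-edge, where $\hat{t}_i - l_i = 0$. In both cases the integrand of $U(\bar{\theta},\bar{\theta},\hat{t},\hat{k})$ vanishes, so $U(\bar{\theta},\bar{\theta},\hat{t},\hat{k})=0$. Combined with the monotone-threshold structure, this aligns $\hat{t}$ with the envelope identity \ref{eq:secondConditionIncentiveCompatibility} used to derive \ref{eq:debtorExpectedProfits}.

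Finally, to conclude optimality I would substitute $\hat{t}_i$ into $V(\theta,\hat{k},\hat{t})$ and show it reproduces \ref{eq:newDebtorsExpectedUtility}; equivalently, I would verify the expected-payment identity
\[
\int_{I_n}\hat{k}(\theta)\,\hat{t}_i(\theta_{-i})\,\phi(\theta)\,d\theta = \int_{I_n}\hat{k}(\theta)\left[l_i(\theta_i,\theta_{-i}) + \frac{F(\theta_i)}{\phi(\theta_i)}\right]\phi(\theta)\,d\theta.
\]
The main obstacle is this last identity: it requires rewriting the left-hand side by conditioning on $\theta_{-i}$, using that on the event $\{\hat{k}=1\}$ we have $\theta_i\in[\underline{\theta},\tilde{\theta}(\theta_{-i})]$, and then applying the same integration by parts used between equation \ref{eq:derivationExpectedInformationalRent} and equation \ref{eq:debtorExpectedProfits}. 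The key algebraic fact is that $\tilde{\theta}(\theta_{-i})$ is defined by $A=\sum_j[\theta_j + e_j(\theta_{-j}) + F(\theta_j)/\phi(\theta_j)]$ with $\theta_i=\tilde{\theta}(\theta_{-i})$, so the pivotal type precisely encodes the virtual-valuation threshold; this is what makes the Vickrey-style payment $\tilde{\theta}+e_i$ attain the virtual-valuation expected payment on the nose. Once this identity is established, Theorem \ref{thm:positiveDebtorExpectedUtility} guarantees feasibility and the pair $(\hat{k},\hat{t})$ is optimal.
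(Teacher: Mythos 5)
Your proof is correct and rests on the same central identity as the paper's, namely $\hat{t}_i(\theta_{-i})-l_i(\theta_i,\theta_{-i})=\tilde{\theta}(\theta_{-i})-\theta_i$, but you run the argument in the opposite direction. The paper \emph{derives} $\hat{t}_i$: it writes the envelope expression $U(\theta_i,\theta_i,\hat{k},\hat{t})=\int_{I_{n-1}}\max\left[\tilde{\theta}(\theta_{-i})-\theta_i,0\right]\phi(\theta_{-i})\,d\theta_{-i}$, equates it with the direct expression \ref{eq:changeExpectedUtilityMechanism}, and reads off the $\theta_{-i}$-measurable solution. You instead propose the payment and \emph{verify} it, which buys you something the paper never states explicitly: because $\hat{t}_i$ is independent of creditor $i$'s own report and the net gain $\tilde{\theta}(\theta_{-i})-\theta_i$ is nonnegative exactly on the event where truthful reporting triggers investment, you obtain ex-post (Vickrey/pivot-style) incentive compatibility, which is strictly stronger than the Bayesian constraint \ref{eq:incentive-compatible-truthful-constraint} the paper works with. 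One economy you miss: your step (iii), which you flag as ``the main obstacle,'' requires no new work --- equation \ref{eq:debtorExpectedProfits} was already derived for an arbitrary incentive-compatible mechanism satisfying $U(\bar{\theta},\bar{\theta},t,k)=0$, so once your steps (i) and (ii) are in place the expected-payment identity is immediate rather than something to be re-established by a fresh integration by parts. One small caveat, shared with the paper: the boundary condition $U(\bar{\theta},\bar{\theta},\hat{t},\hat{k})=0$ holds only when $\tilde{\theta}(\theta_{-i})\leq\bar{\theta}$ almost surely; your case analysis omits the possibility $\tilde{\theta}(\theta_{-i})>\bar{\theta}$, but the paper's own step \ref{eq:derivationStep1} makes the same implicit assumption, so this is not a defect specific to your argument.
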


\begin{proof}
The investment rule $\hat{k}\left(\theta\right)$ can be rewritten
by the definition of $\tilde{\theta}\left(\theta_{-i}\right)$ as
\[
k\left(\theta\right)=\begin{cases}
1 & \text{for }\theta_{i}\leq\tilde{\theta}\left(\theta_{-i}\right)\\
0 & \text{otherwise}
\end{cases}
\]
Thus, the change in utility to creditor $i$ from the optimal mechanism
by constraint \ref{eq:incentive-compatible-truthful-constraint} is
\begin{align}
U\left(\theta_{i},\theta_{i},\hat{k},\hat{t}\right) & =\int_{I_{n-1}}\int_{\theta_{i}}^{\bar{\theta}}\hat{k}\left(u,\theta_{-i}\right)du\phi\left(\theta_{-i}\right)d\theta_{-i}\\
 & =\int_{I_{n-1}}\text{max}\left[\tilde{\theta}\left(\theta_{-i}\right)-\theta_{i},0\right]\phi\left(\theta_{-i}\right)d\theta_{-i}\label{eq:derivationStep1}
\end{align}

From the previous equation \ref{eq:derivationStep1} and the change
in the expected utility from the non-optimal mechanism \ref{eq:changeExpectedUtilityMechanism},
\begin{align}
 & \int_{I_{n-1}}\hat{k}\left(\theta_{i},\theta_{-i}\right)t_{i}\left(\theta_{i},\theta_{-i}\right)\phi\left(\theta_{-i}\right)d\theta_{-i}\\
 & =\int_{I_{n-1}}\left(\theta_{i}+e_{i}\left(\theta_{-i}\right)+\text{max}\left[\tilde{\theta}\left(\theta_{-i}\right)-\theta_{i},0\right]\right)\phi\left(\theta_{-i}\right)d\theta_{-i}\label{eq:derivationStep2}
\end{align}

Finally, we arrive to the solution to equation \ref{eq:derivationStep2},
\[
\hat{t}_{i}\left(\theta_{i},\theta_{-i}\right)=\hat{t}_{i}\left(\theta_{-i}\right)=\tilde{\theta}\left(\theta_{-i}\right)+e_{i}\left(\theta_{-i}\right)
\]
since $\tilde{\theta}\left(\theta_{-i}\right)-\theta_{i}<0$ when
$\hat{k}\left(\theta_{i},\theta_{-i}\right)=0$, and $\tilde{\theta}\left(\theta_{-i}\right)-\theta_{i}>0$
when $\hat{k}\left(\theta_{i},\theta_{-i}\right)=1$.
\end{proof}
In the optimal mechanism \ref{thm:optimal-mechanism}, a payment is
received by creditor $i$ which is independent of the reported type,
but it will depend on the types declared by other creditors as explained
in the following:
\begin{itemize}
\item the function $\tilde{\theta}\left(\theta_{-i}\right)$ is decreasing
in each component of $\theta_{-i}$ .
\item the function $e_{i}\left(\theta_{-i}\right)$ is increasing in each
component of $\theta_{-i}$ : if other creditors report high expected
recovery values, then creditor $i$ would increase his own expected
recovery value.
\end{itemize}
If the holding out from $\tilde{\theta}\left(\theta_{-i}\right)$
dominates the revision of expectations from $e_{i}\left(\theta_{-i}\right)$,
then transfers are decreasing in each component of $\theta_{-i}$
when using the optimal mechanism.

\section{\label{sec:Practical-Implementation}Practical Implementation}

To ease exposition, we consider the simplest case: consider an example
with $n=2$ creditors and their private expected recovery value from
bankruptcy $\theta_{i}$ be uniformly distributed on $\left[0,1\right]$. 

Assume that the revision function \ref{eq:revisionFunctionCondition}
is defined by
\[
e_{i}\left(\theta_{i}\right)=\alpha\left(\theta_{i}-\frac{1}{2}\right)
\]
 such that when the constant $\alpha>1$ creditors give more weight
to the estimate of other creditors, and vice versa. 

The optimal continuation value is given by
\begin{equation}
\hat{k}\left(\theta\right)=\begin{cases}
1 & \text{if }A\geq\left(\theta_{1}+\theta_{2}\right)\left(1+\alpha\right)-\alpha\\
0 & \text{otherwise}
\end{cases}\label{eq:derivedOptimalContinuationValue}
\end{equation}
\textcolor{black}{thus}
\[
\tilde{\theta}\left(\theta_{2}\right)=\left(\frac{4}{2+\alpha}\right)\left(A-\frac{\alpha}{2}\right)-2\theta_{1}
\]
the payment to creditor 1 is
\begin{equation}
\hat{t}_{1}\left(\theta_{2}\right)=\tilde{\theta}\left(\theta_{2}\right)+e\left(\theta_{2}\right)=\left(\frac{4A-\alpha^{2}}{2\alpha+4}\right)-\theta_{2}\left(1-\alpha\right)\label{eq:derivedPaymentCreditor1}
\end{equation}
and the payment to creditor 2 is
\begin{equation}
\hat{t}_{2}\left(\theta_{1}\right)=\tilde{\theta}\left(\theta_{1}\right)+e\left(\theta_{1}\right)=\left(\frac{4A-\alpha^{2}}{2\alpha+4}\right)-\theta_{1}\left(1-\alpha\right)\label{eq:derivedPaymentCreditor2}
\end{equation}
Finally, the amount of debt forgiveness by each creditor $i$ is given
by $d-\theta_{i}$.

\subsection{Implementation on ``The Secure Spreadsheet''}

\begin{figure}[H]
\begin{centering}
\label{screenshots}
\par\end{centering}
\begin{centering}
\includegraphics[scale=0.5]{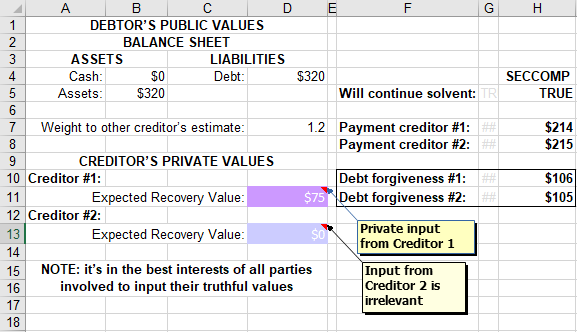}
\par\end{centering}
\begin{centering}
\includegraphics[scale=0.5]{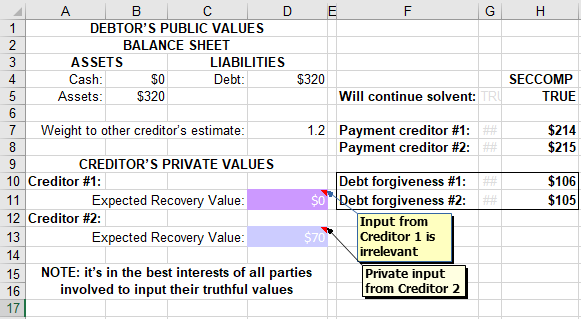}
\par\end{centering}
\caption{Screenshots from ``The Secure Spreadsheet''. Top is creditor \#1,
bottom is creditor \#2.}

\end{figure}

The previously derived closed-form formulae for the probability of
the debtor to receive debt relief and continue being solvent \ref{eq:derivedOptimalContinuationValue},
the payment to creditor \#1 \ref{eq:derivedPaymentCreditor1} and
to creditor \#2 \ref{eq:derivedPaymentCreditor2} can be securely
computed very easily on ``The Secure Spreadsheet'' (USPTO Patent
10,423,806\cite{secureSpreadsheetPatent}), available online for download
\cite{secureSpreadsheetProgram}: first precognised in the article
``The G-d Protocols'' \cite{gdProtocols} from 1997, ``The Secure
Spreadsheet'' is the first and only user program for general-purpose
secure computation. For maximum performance at an affordable cost,
the preferred secure computation protocol used by the ``The Secure
Spreadsheet'' is the \textit{dual-execution} protocol\cite{tradeoffs2party}:
\begin{thm}
(Dual-Execution protocol\cite{quidProQuo}). If the garbled circuit
construction is secure against semi-honest adversaries and the hash
function $\text{M}$ is modelled as a random oracle, then the Dual-Execution
protocol securely computes $f$ implementing the ideal functionality
$\mathcal{F}_{YOVEL}$ \ref{fig:Ideal-functionality-Yovel} if for
every non-uniform probabilistic polynomial-time adversary $\mathcal{A}$
in the real model, there exists a non-uniform probabilistic polynomial-time
adversary $\mathcal{S}$ in the ideal model such that
\[
\left\{ \text{IDEAL}_{f,S\left(\text{aux}\right)}\left(x,y,n\right)_{x,y,aux\in\left\{ 0,1\right\} ^{*}}\right\} \overset{c}{\equiv}\left\{ \text{REAL}_{\prod,\mathcal{A}\left(\text{aux}\right)}\left(x,y,n\right)\right\} _{x,y,\text{aux}\in\left\{ 0,1\right\} ^{n}}
\]
\end{thm}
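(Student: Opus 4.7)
The plan is to follow the standard simulation paradigm: for each possible corruption of one party by $\mathcal{A}$, we exhibit an ideal-model adversary $\mathcal{S}$ whose output joint with the honest party's output in $\mathcal{F}_{YOVEL}$ is computationally indistinguishable from $\mathcal{A}$'s view jointly with the honest party's output in the real execution of the Dual-Execution protocol. Because the protocol is symmetric in the two parties (each simultaneously acts as garbler and evaluator), it suffices to construct $\mathcal{S}$ for a single corruption, say $P_{1}$, and invoke symmetry for $P_{2}$.

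First, I would spell out the simulator. On input $x$ and the auxiliary string $\mathrm{aux}$, $\mathcal{S}$ internally runs $\mathcal{A}$ and simulates the honest $P_{2}$ as follows: (i) in the garbled-circuit round where $\mathcal{A}$ acts as garbler, $\mathcal{S}$ uses the OT/extractor built into the underlying garbling scheme to extract $\mathcal{A}$'s effective input $x^{*}$ and forwards $x^{*}$ to $\mathcal{F}_{YOVEL}$, receiving $z=f(x^{*},y)$; (ii) in the garbled-circuit round where $\mathcal{A}$ acts as evaluator, $\mathcal{S}$ invokes the garbled-circuit simulator on output $z$ to produce a fake circuit and input labels whose evaluation yields $z$; (iii) in the final secure equality-check phase over $\mathrm{M}$, $\mathcal{S}$ programs the random oracle so that $\mathcal{A}$'s committed output value is consistent with $z$ whenever $\mathcal{A}$ behaved honestly in the garbling round, and aborts otherwise (mirroring what the honest $P_{2}$ would do).

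Next, I would establish indistinguishability by a hybrid argument. $\mathrm{Hyb}_{0}$ is the real execution. $\mathrm{Hyb}_{1}$ replaces the garbled circuit sent by honest $P_{2}$ with the simulated circuit produced from $z$; indistinguishability here reduces to the semi-honest security of the garbling scheme together with the obliviousness of the OT sub-protocol. $\mathrm{Hyb}_{2}$ replaces the equality-check messages with simulated ones, using the random-oracle programmability of $\mathrm{M}$ to argue that $\mathcal{A}$ either produces a value consistent with the simulated output (probability overwhelming) or is caught by the check (probability overwhelming in the oracle queries). $\mathrm{Hyb}_{3}$ replaces the honest party's output as derived from the transcript with the ideal output $z$ returned by $\mathcal{F}_{YOVEL}$; this is the ideal-world experiment. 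The standard reductions give $\mathrm{Hyb}_{i}\stackrel{c}{\equiv}\mathrm{Hyb}_{i+1}$.

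The hard part will be the equality-check step, which is precisely what makes Dual-Execution delicate: a malicious garbler can correlate the input labels it transmits with the equality predicate and thereby bias a single predicate bit about the honest party's input, the well-known ``one-bit leakage'' of the protocol. I would therefore argue security against the relaxed functionality underlying \cite{quidProQuo} (i.e., $\mathcal{F}_{YOVEL}$ augmented with a one-bit adversarial predicate query), which the random-oracle model of $\mathrm{M}$ exactly suffices to capture: any distinguishing advantage beyond this single predicate reduces either to breaking the semi-honest garbling scheme or to finding an $\mathrm{M}$-collision, both negligible. Symmetry of the construction then yields the statement for a corrupted $P_{2}$, completing the argument.
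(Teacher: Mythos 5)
The paper itself gives no proof of this theorem: it is imported essentially verbatim from the cited reference \cite{quidProQuo}, so there is no in-paper argument to compare yours against. Your sketch does reproduce the structure of the proof in that reference --- a simulator that extracts the adversary's effective input in the round where it garbles, invokes the garbled-circuit simulator on the ideal output in the round where it evaluates, programs the random oracle in the equality test, and then a hybrid argument --- and you correctly isolate the equality check as the delicate step.

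However, precisely because of that step your sketch does not prove the statement as written, and no sketch can. The dual-execution protocol of \cite{quidProQuo} is proven secure only with respect to a \emph{relaxed} ideal functionality in which the adversary is additionally granted one arbitrary predicate query on the honest party's input (``one-bit leakage''): a malicious garbler can craft a circuit whose output-wire labels encode a predicate of the honest input and then observe whether the equality check aborts. That attack is a genuine real/ideal distinguisher against the unrelaxed $\mathcal{F}_{YOVEL}$, so the displayed indistinguishability claim is false as stated unless $\mathcal{F}_{YOVEL}$ is augmented with the leakage oracle. You notice this and quietly switch targets in your final paragraph to the relaxed functionality --- which is the correct theorem to prove, but it is strictly weaker than the one you were asked to establish, and your hybrid $\mathrm{Hyb}_{1}\stackrel{c}{\equiv}\mathrm{Hyb}_{2}$ step cannot go through for the unrelaxed functionality. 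To close the gap you must either state explicitly that security holds only with one-bit leakage (and assess whether leaking one predicate bit about a creditor's private recovery value $\theta_{i}$ is tolerable in the debt-settlement application), or replace dual execution with a fully malicious-secure protocol. Note also that $\mathcal{F}_{YOVEL}$ as defined involves $n$ creditors plus the debtor, whereas your two-party simulation argument covers only the $n=2$ instantiation used in the implementation section.
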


As pictured above on the captured screenshots \ref{screenshots},
the public values for the debt $D$, the continuation value $C$ and
the weight $\alpha$ to the estimate to the other creditors must be
the same on both spreadsheets for creditors 1 and 2. On the top spreadsheet
for creditor 1, the only private that is taken into account is the
one inputted by creditor 1 (et vice versa for the bottom spreadsheet
for creditor 2). The final values computed using secure computation
appear on the right-hand side of the captured screenshots, under the
column ``SECCOMP''.

\subsection{Implementation on a blockchain}

The previous implementation on spreadsheets \ref{screenshots} is
also realised on a blockchain using Raziel \cite{Raziel} smart contracts
and Pravuil \cite{Pravuil} consensus: identical secure computations
as the ones carried out on ``The Secure Spreadsheet'' are executed
among creditors and debtors interfacing the blockchain with a mobile
app.

\begin{figure}[H]
\begin{centering}
\includegraphics[scale=0.4]{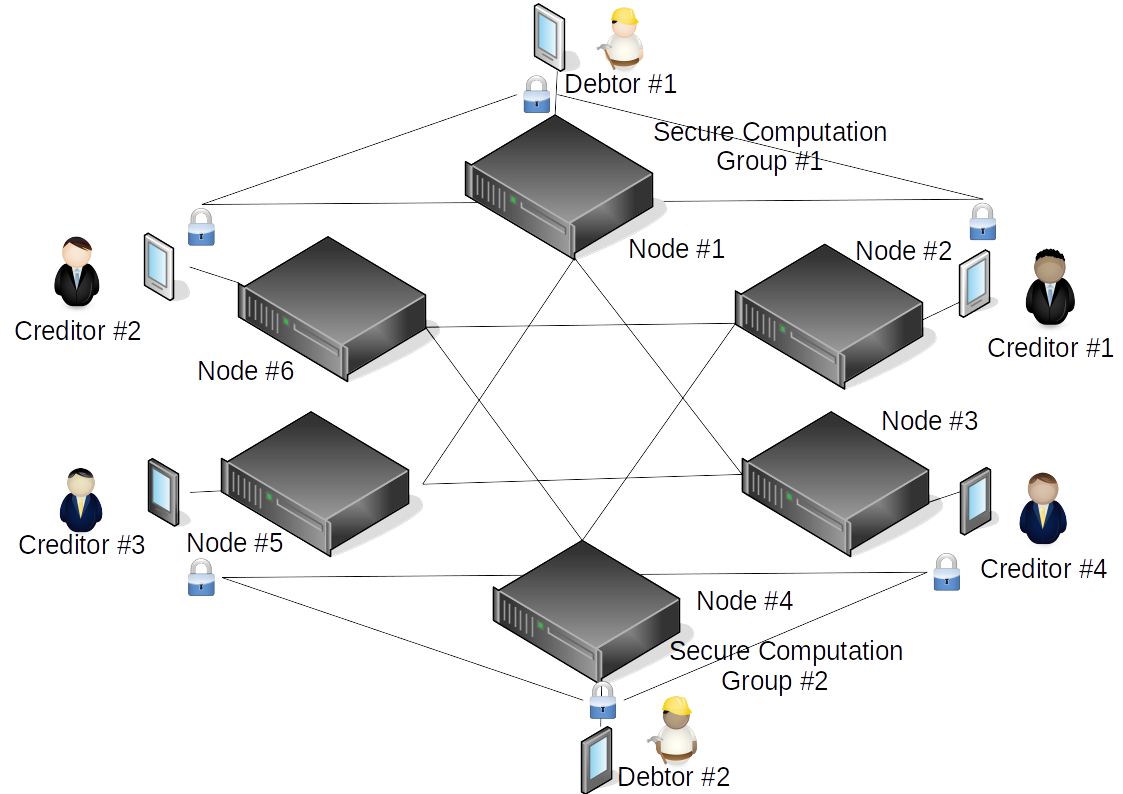}
\par\end{centering}
\caption{Two secure computation groups forgiving debts on a blockchain}

\end{figure}

Note the special suitability of Pravuil \cite{Pravuil} consensus
integrating real-world identity on this blockchain: debt management
requires real-world identities, otherwise Sybil attacks could create
unlimited fake debts. Due to this reason, this is the only valid blockchain
consensus protocol \cite{Pravuil} for the purpose of debt management.

\section{\label{sec:Conclusion}Conclusion}

The present paper has tackled and successfully solved the problem
of debt relief and forgiveness by securely computing optimal debt
settlements. The mathematical proofs hereby provided demonstrate that
participation in the proposed mechanism is within the economically
rational interests of all the involved parties by truthfully providing
their private information, thus removing the need for third parties.
Additionally, the provided implementations demonstrate the practicality
of the securely computed mechanism.

\bibliographystyle{alpha}
\bibliography{bib}

\end{document}